\newtheorem{example}{Example}
\newtheorem{question}{Question}
\newtheorem{theorem}{Theorem}
\newtheorem{lemma}{Lemma}
\newtheorem{definition}{Definition}
\newtheorem{remark}{Remark}
\newtheorem{algorithm}{Algorithm}
\newcommand{\defword}{\textbf}
\def\ps@pprintTitle{%
  \let\@oddhead\@empty
  \let\@evenhead\@empty
  \def\@oddfoot{\hfil}
  \let\@evenfoot\@oddfoot
}
\begin{document}

\begin{frontmatter}



\title{Determining the Equivalence of Small Zero-one Reaction Networks}


\author{Yue Jiao\fnref{aff1,aff2}}
\author{Xiaoxian Tang\fnref{aff1}}
\fntext[aff1]{School of Mathematical Sciences, Beihang University, Beijing, 100191, China}
\fntext[aff2]{Max Planck Institute of Molecular Cell Biology and Genetics, Dresden, 01307, Germany}

\begin{abstract}
Zero-one reaction networks are pivotal to cellular signaling, and establishing the equivalence of such networks represents a foundational computational challenge in the realm of chemical reaction network research. Herein, we propose a high-efficiency approach for identifying the equivalence of zero-one networks. Its efficiency stems from a set of criteria tailored to judge the equivalence of steady-state ideals derived from zero-one networks, which effectively reduces the computational cost associated with Gr\"obner basis calculations. Experimental results demonstrate that our proposed method can successfully categorize more than three million networks by their equivalence within a feasible timeframe. Also, our computational results for two important classes of quadratic zero-one networks ($3$-dimensional with $3$ species, $6$ reactions; $4$-dimensional with $4$ species, $5$ reactions) show that they have no positive steady states for a generic choice of rate constants, implying these small networks generically exhibit neither multistability nor periodic orbits.    
\end{abstract}



\begin{keyword}
  Chemical reaction network, 
Gr\"obner basis, Steady-state ideal  
\end{keyword}

\end{frontmatter}

\maketitle

\section{Introduction}
The dynamical systems generated by biochemical reaction networks give rise to the following fundamental computational question.
\textcolor{black}{
\begin{question}\label{question}
Which reaction networks have the same steady-state ideal?
\end{question}
}
The  congruence of steady-state ideals is one of the core equivalence problems in the field of  chemical reaction networks.  Numerous key dynamical characteristics of chemical reaction networks—including multistability, Hopf bifurcations, and absolute concentration robustness (ACR)—are correlated with switch-like behaviors, decision-making processes, oscillations, and other key events in cellular signaling
\cite{FA2020, SE2016, XK2023, CE2019, MB2023, LE2025}.
Typically, examining the steady states constitutes the primary step in investigating the dynamical behaviors of a chemical reaction system  arising under mass-action kinetics. 
Notice that  the  steady states can be considered as  positive real points located in the algebraic variety generated by the steady-state ideal.
However, it is possible for two distinct networks to have exactly the same steady-state ideal. For instance, consider the following two reaction networks \eqref{eq:network1} and \eqref{eq:network2}.
\begin{align}
 \label{eq:network1} &X_1+X_2\xrightarrow{\kappa_1}0,  
 &&X_2 \xrightarrow{\kappa_2}X_1+X_2, 
 &&0\xrightarrow{\kappa_3}X_1+X_2,
 &&0\xrightarrow{\kappa_4}X_2\\
\label{eq:network2}
&X_1+X_2\xrightarrow{\kappa_1}X_2,  
&&X_2 \xrightarrow{\kappa_2}X_1,  
&&0\xrightarrow{\kappa_3}X_1,
&&0\xrightarrow{\kappa_4}X_2
\end{align}
Their steady-state ideals are respectively generated by the following two sets of polynomials \eqref{eq:poly1} and \eqref{eq:poly2} (we will explain how to write down these polynomials for the given networks later in \eqref{eq:sys}).
\begin{align}
&\{-\kappa_1x_1x_2+\kappa_2x_2+\kappa_3, -\kappa_1x_1x_2+\kappa_3+\kappa_4\}\label{eq:poly1}\\
&\{-\kappa_1x_1x_2+\kappa_2x_2+\kappa_3, -\kappa_2x_2+\kappa_4\}
\label{eq:poly2}
\end{align}
It is straightforward to compute that   the above two steady-state systems have the same  reduced  Gr\"obner basis  \cite{BB1965} in $\mathbb Q(\kappa)[x]$: 
\begin{align}
\{\kappa_2x_2 - \kappa_4, \kappa_1\kappa_4x_1 - \kappa_2\kappa_3 - \kappa_2\kappa_4\}.
\end{align}
In numerous instances, the two networks above are regarded equivalent if our focus is solely on dynamical behaviors governed by steady states, including multistationarity, multistability, Hopf bifurcations, and ACR. 
Notice that relabeling the species or the reactions in a network naturally gives another kind of equivalence, say natural equivalence. 
For example, if we relabel the species $X_1$ and $X_2$ and we relabel the first and the second reactions in network \eqref{eq:network1}, then we get the following network 
\begin{align}\label{eq:network3}
 X_1 \xrightarrow{\kappa_1}X_1+X_2,\;\; X_1+X_2\xrightarrow{\kappa_2}0,\; \;
 0\xrightarrow{\kappa_3}X_1+X_2,\;\;
 0\xrightarrow{\kappa_4}X_1.
\end{align}
Of course, network \eqref{eq:network3} should be considered  equivalent to network \eqref{eq:network1}. So, it is also equivalent to network \eqref{eq:network2}. Hence, in our context, we say that two networks are equivalent if they have the same steady-state ideal after undergoing relabeling, see Definition \ref{def:equivalent} (for further details on the ``dynamical equivalence" of networks, refer to \cite{MB2024}). Determining equivalence among a large class of reaction networks is computationally challenging, as the standard approach involves computing their respective reduced  Gr\"obner bases.
In a recent related work \cite{ME2020},  the authors point out that there are too few works investigating the equivalence of steady-state ideals, and we also see that deriving explicit conditions for the equivalence is  difficult in general.

In this work, we propose focusing on a specific class of small networks to make this problem more feasible.   The  idea of   studying small networks is motivated by the fact that many important dynamical features {are inherited from} a large network to a smaller ``subnetwork" such as multistability  \cite{JA2013, BP16}, oscillation \cite{M2018} and local bifurcations \cite{MBJ2023}. A growing body of recent research focuses on finding the smallest networks from extensive classes of networks that admit these dynamical behaviors, and determining network equivalence first is essential to simplify the search process \cite{MB2023,XK2023,KN2024,YX2025}.  
 Although computing the reduced Gr\"obner basis  \cite{WP2022} for a single small network  might not be difficult,   determining equivalence in the aforementioned studies  remains  challenging,  as one typically needs to handle a substantial amount of networks. 
  
  Another core proposal of this work is to focus specifically on zero-one networks for equivalence determination.
 Zero-one  networks are  crucial for intracellular signal transduction primarily such as the phosphorylation  cycle \cite{HSP2008}, cell cycle \cite{NTJ2022}, hybrid histidine kinase \cite{SRS2023}, and so on. They inherently ensure signal robustness, enabling stable responses against minor internal or external fluctuations instead of ambiguous intermediate outputs, which is essential for cells to make accurate functional decisions (e.g., proliferation, apoptosis).  

In this work, our main contribution is   an efficient algorithm (Algorithm \ref{alg:main} with  sub-algorithms Algorithms \ref{alg:inconsistent}--\ref{alg:gb1}) for determining the equivalence of zero-one networks {that admit positive steady states}. The efficiency of the proposed new method comes from several criteria for determining the equivalence of the steady-state ideals arising from zero-one networks, which dramatically  cuts down the expenses on computing Gr\"obner bases since many computations are avoided.  {Note that zero-one networks are enumerable when the numbers of species  and reactions are given. We make use of this fact to 
generate the possible rational vectors ``$a$" stated in  Theorem \ref{thm:coe N}, enabling efficient verification of whether a network admits positive steady states, see Algorithm \ref{alg:inconsistent}. 
By a similar idea, we generate  the possible rational vectors ``$a$" stated in condition (I) of Theorem \ref{lm:gb 1 para} to efficiently check if a steady-state ideal is vacuous, see Algorithm \ref{alg:gb1}. In some special cases, the ``zero-one" assumption can be  used to develop simple criteria for determining the vacuous ideals, see Theorem \ref{lm:gb1 col}.}  {Consequently, our proposed method obviates the necessity of Gr\"obner basis computations, and accurately excludes both networks that do not admit positive steady states and those with vacuous ideals, thus enabling highly efficient, precise pre-screening of target networks.
 }

 This paper reports improved and extended research work relative to the preliminary version published at ISSAC 25, with specific advancements as follows. 
 First, in terms of algorithm design and implementation,  we make the following two main improvements.
 \begin{itemize}
     \item We have refined the first three steps of the main algorithm  \cite[Algorithm 
 \ref{alg:main}]{YX2025}. We enumerate stoichiometric matrices instead of reaction networks, motivated by two key observations: a single stoichiometric matrix typically maps to numerous distinct networks, and preliminary steps including consistency verification are only necessary for stoichiometric matrices. 
 \item We develop Theorem \ref{thm:coe N} and Algorithm \ref{alg:inconsistent} to verify consistency, given that most input networks do not admit positive steady states and we aim to filter them out prior to determining equivalence.  
 \end{itemize}
  We have retested the new algorithm to  determine  the equivalence of  the smallest zero-one networks that admit multistationarity (three-dimensional, three-species and five-reaction \cite{YX2024}). 
  As reported in \cite{YX2025}, this class 
   contains over three millions  networks, and the new algorithm returns  {$38185$} networks after determining the equivalence. 
Experimental results confirm the new algorithm achieves $50\%$ higher efficiency, halving computational time relative to the old version. A comparison between \cite[Table 1]{YX2025} and Table \ref{ta:main} reveals a reduction in total computational time from $1$ hour to less than $0.5$ hours. It is seen from Table \ref{ta:main} that although the total number of tested networks exceeds three millions, the number  of corresponding stoichiometric matrices is merely $43520$. After implementing the first three steps, the number is further reduced to $2548$ as  the matrices that admit no positive steady states are excluded and  the natural equivalence is checked.  At this stage, the  corresponding network count drops to around $100,000$. Notably, the first three steps take only $3$ minutes to complete, while the old version took roughly half an hour to perform these preliminary operations.

  Second, from an application standpoint, we further apply our algorithm to two important families of small zero-one networks: three-dimensional systems (three species with six reactions) and four-dimensional systems (four species with five reactions). Our overarching goal is to identify all minimal zero-one networks that admit multistability \cite{YX2024} and Hopf bifurcations \cite{XK2023} from these two families, respectively. Given the substantial size of each family, our current testing efforts have focused exclusively on quadratic networks. Computational results show that no networks remain in either family after excluding inconsistent networks, those with vacuous steady-state ideals, and those with monomial steady-state ideals (see Tables \ref{tab:363}–\ref{tab:454}). This implies that for generic  choices of rate constants, such small quadratic zero-one networks lack positive steady states and thus cannot display nontrivial dynamical behaviors (e.g., multistability and periodic orbits). Notably, this finding aligns with the established result for two-dimensional quadratic networks \cite[Lemma 3]{MBJ2023}, suggesting that dynamically nontrivial small zero-one networks must be at least cubic.
 

The rest of this paper is organized as follows.
 Section~\ref{sec:background} reviews core concepts of reaction networks and the dynamical systems derived from mass-action kinetics, while formally defining the steady-state ideal and network equivalence within our framework.
Section~\ref{sec:thm} establishes a suite of practical criteria—including Theorems \ref{thm:coe N}–\ref{lm:gb1 col} and Lemma \ref{lm:ideal equal}—for determining network equivalence.
Section~\ref{sec:alg} presents the main algorithm (Algorithm \ref{alg:main}) alongside its auxiliary sub-algorithms (Algorithms \ref{alg:inconsistent}–\ref{alg:gb1}) tailored to verifying the equivalence of zero-one networks.
Section~\ref{sec:implement} demonstrates the main algorithm’s performance via applications to three-dimensional zero-one networks (three species, five reactions), as well as two families of quadratic zero-one networks: three-dimensional (three species, six reactions) and four-dimensional (four species, five reactions). This section further elaborates on implementation details and reports computational runtime results.

\section{Background}\label{sec:background}
In this section, we briefly recall the standard notions and definitions for reaction networks and computational algebraic geometry, see \cite{CFMW}  and \cite{DC1997} for more details. 

A \defword{reaction network} $G$  (or \defword{network} for short) consists of a set of $s$ species $\{X_1, X_2, \dots, X_s\}$ and a set of $m$ reactions:
\begin{align}\label{eq:network}
\alpha_{1j}X_1 +
 \dots +
\alpha_{sj}X_s
~ \xrightarrow{\kappa_j} ~
\beta_{1j}X_1 +
 \dots +
\beta_{sj}X_s,
 \;
    {\rm for}~
	j=1,2, \ldots, m,
\end{align}
where all \defword{stoichiometric coefficients}  $\alpha_{ij}$ and $\beta_{ij}$ are non-negative integers, and we  assume that
$(\alpha_{1j},\ldots,\alpha_{sj})\neq (\beta_{1j},\ldots,\beta_{sj})$.
Each $\kappa_j \in \mathbb R_{>0}$ is called a \defword{rate constant} corresponding to the
$j$-th reaction in \eqref{eq:network}.
We say a reaction is a \defword{zero-one reaction}, if the stoichiometric coefficients $\alpha_{ij}$ and $\beta_{ij}$ in \eqref{eq:network}
belong to $\{0,1\}$.
We say a network \eqref{eq:network} is a \defword{zero-one network} if it only has zero-one reactions. 
We say a reaction is \defword{quadratic}, if the sum of stoichiometric coefficients $\alpha_{ij}$ in \eqref{eq:network} equals to $2$. 
We say a network \eqref{eq:network} is a \defword{quadratic network} if it only has quadratic reactions.
For any reaction defined in \eqref{eq:network}, the \defword{stoichiometric vector} is  $(\beta_{1j}-\alpha_{1j}, \dots, \beta_{sj}-\alpha_{sj})^\top \in \mathbb{Q}^s$. 
We say finitely many reactions are linearly independent if their corresponding stoichiometric vectors are linearly independent. 
We call the $s\times m$ matrix with
$(i, j)$-entry equal to $\beta_{ij}-\alpha_{ij}$ the
\defword{stoichiometric matrix} of
$G$, denoted by $\mathcal{N}$. 
The \defword{dimension} of $G$ is defined as the rank of $\mathcal{N}$. 
We call the $s\times m$ matrix the \defword{reactant matrix} of $G$ with $(i,j)$-entry equal to $\alpha_{ij}$, denoted by $\mathcal{Y}$.

Denote by $x_1, \ldots, x_s$ the concentrations of the species $X_1, \ldots, X_s$, respectively.
Under the assumption of mass-action kinetics, we describe how these concentrations change in time by the following system of ODEs:
{
\begin{align}\label{eq:sys}
\dot{x}\;=\;f(\kappa, x)\;:=\;\mathcal{N}v(\kappa, x),
\end{align}}
where $x=(x_1, x_2, \ldots, x_s)^\top$, $v(\kappa, x)=(v_1(\kappa, x),\dots,v_m(\kappa, x))^\top$ and
\begin{align}\label{eq:vj}
    v_j(\kappa, x)\;:=\;\kappa_j\prod \limits_{i=1}^sx_i^{\alpha_{ij}}. 
\end{align}
Treating the rate constants as a real vector of parameters $\kappa:=(\kappa_1, \dots, \kappa_m)^\top$ yields polynomials $f_{i}(\kappa,x) \in \mathbb Q(\kappa)[x]$, for $i\in\{1,\dots, s\}$. 
The polynomial ideal generated by $f$ in $\mathbb{Q}(\kappa)[x]$, denoted by  $\langle f \rangle$, is called  the \defword{steady-state ideal}.

\begin{definition}
For two $s\times m$ matrices $\mathcal{A}$ and $\hat{\mathcal{A}}$, we say the matrix $\hat{\mathcal{A}}$ \defword{has the form} of the matrix $\mathcal{A}$ if we can obtain $\hat{\mathcal{A}}$ from $\mathcal{A}$ by permuting the rows and columns.
\end{definition}
\begin{definition}
    For two networks $G$ and $\hat{G}$, we say the network $\hat{G}$ \defword{has the form} of $G$ if we can obtain $\hat{G}$ from $G$ by relabeling the species $X_1, \ldots, X_s$ as $\hat{X_1}, \ldots, \hat{X_s}$, or by relabeling the reactions ${\mathcal R}_1, \ldots, {\mathcal R}_m$ as $\hat{{\mathcal R}}_1, \ldots, \hat{{\mathcal R}}_m$.
\end{definition}
\begin{definition}\label{def:equivalent}
 Two  networks $G_1$ and $G_2$  are \defword{equivalent} if there exist $\hat{G_1}$ and $\hat{G_2}$ such that $\hat{G_i}$ has the form of $G_i$ for $i\in \{1,2\}$, and $\hat{G_1}$ and $\hat{G_2}$ have the same steady-state ideal. \end{definition}

For any given rate-constant vector $\kappa^*\in {\mathbb R}^m_{>0}$,  a \defword{steady state} 
of~\eqref{eq:sys} is a  vector of concentrations
$x^* \in \mathbb{R}_{\geq 0}^s$ such that $f(\kappa^*, x^*)=\vec{0}$, where  $f(\kappa, x)$ is on the
right-hand side of the
ODEs~\eqref{eq:sys}, and $\vec{0}$ denotes the column vector whose coordinates are all zero.
If all the coordinates of a steady state $x^*$ are strictly positive (i.e., $x^*\in \mathbb{R}_{> 0}^s$), then we call $x^*$ a \defword{positive steady state}. In applications, we usually only care about positive steady states. 
We say a  network  \defword{admits  positive steady states} (or, \defword{consistent}) if there exists a rate-constant
vector   such that  {system} \eqref{eq:sys} has at least one positive steady state. 
For the stoichiometric matrix $\mathcal{N} \in \mathbb{Q}^{s \times m}$, the \defword{positive flux cone} of $\mathcal{N}$ is defined as
\begin{align}\label{eq:fluxcone po}
\mathcal{F}^+(\mathcal{N})\; := \;\{\gamma \in {\mathbb R}_{> 0}^m|\mathcal{N} \gamma = \vec{0} \}.
\end{align}
It is well-known that a network admits positive  steady states if and only if $\mathcal{F}^+(\mathcal{N})\neq \emptyset$. 
\section{Theorem}\label{sec:thm}

\subsection{Sufficient conditions for inconsistent}\label{sub:inconsistent}
\begin{theorem}
\label{thm:coe N}
Consider a network $G$ with a stoichiometric matrix  $\mathcal{N}$. If there exists $a \in \mathbb{Q}^s$ such that $a^\top \mathcal{N}$ is non-zero and does not change sign, then the network $G$ admits no positive steady states.
\end{theorem}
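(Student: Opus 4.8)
The plan is to argue by contradiction using the characterization stated just before the theorem: a network admits positive steady states if and only if $\mathcal{F}^+(\mathcal{N}) \neq \emptyset$. So suppose $G$ admits a positive steady state; then there exists $\gamma \in \mathbb{R}^m_{>0}$ with $\mathcal{N}\gamma = \vec{0}$. The idea is to pair this with the hypothesized vector $a \in \mathbb{Q}^s$ and derive a contradiction from a sign computation.

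First I would form the scalar $a^\top \mathcal{N} \gamma$ and evaluate it two ways. On one hand, since $\mathcal{N}\gamma = \vec{0}$, we have $a^\top \mathcal{N}\gamma = a^\top \vec{0} = 0$. On the other hand, writing $w := a^\top \mathcal{N} \in \mathbb{Q}^m$ (a row vector), we have $a^\top \mathcal{N}\gamma = w \gamma = \sum_{j=1}^m w_j \gamma_j$. By hypothesis $w$ is non-zero and does not change sign, so without loss of generality $w_j \geq 0$ for all $j$ with $w_{j_0} > 0$ for some index $j_0$ (the case $w_j \leq 0$ is symmetric, or one replaces $a$ by $-a$). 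Since every $\gamma_j > 0$, each term $w_j \gamma_j \geq 0$ and the $j_0$-th term satisfies $w_{j_0}\gamma_{j_0} > 0$, so $\sum_{j=1}^m w_j \gamma_j > 0$. This contradicts $a^\top \mathcal{N}\gamma = 0$, so no such $\gamma$ exists, i.e.\ $\mathcal{F}^+(\mathcal{N}) = \emptyset$, and therefore $G$ admits no positive steady states.

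There is essentially no hard step here; the argument is a one-line application of strict positivity of $\gamma$ against the fixed sign of $a^\top \mathcal{N}$. The only point requiring a little care is the reduction to the case $w_j \geq 0$: "does not change sign" should be read as "all entries are $\geq 0$ or all entries are $\leq 0$", and combined with "non-zero" this guarantees at least one strictly nonzero entry of the appropriate sign, which is exactly what makes the sum strictly positive (or strictly negative) rather than merely nonnegative. If one wanted to be fully self-contained one could also note that this is just the Farkas-lemma / Gordan-alternative reason why $\mathcal{F}^+(\mathcal{N}) \neq \emptyset$ fails, but invoking the stated equivalence directly is cleaner and suffices.
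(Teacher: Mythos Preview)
Your proof is correct and follows essentially the same idea as the paper's: pair the sign-fixed row vector $a^\top\mathcal{N}$ against a strictly positive vector in $\ker\mathcal{N}$ to force a nonzero value that must also be zero. The only cosmetic difference is that the paper takes this positive kernel vector to be $v(\kappa^*,x^*)$ coming directly from a putative positive steady state (via $f=\mathcal{N}v$), whereas you invoke the stated equivalence with $\mathcal{F}^+(\mathcal{N})\neq\emptyset$ and work with an abstract $\gamma$; the computations are identical.
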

\begin{proof}
Let $f$ be the steady-state system defined as in \eqref{eq:sys}. 
Let $v=(v_1, \dots, v_m)^\top$ be defined as in \eqref{eq:vj}. 
By \eqref{eq:sys}, we have 
\begin{align}
a^\top \mathcal{N} v(\kappa, x) = \sum\limits_{i=1}^s a_i f_i(\kappa, x), \label{eq:anv=af}
\end{align}
where $a_i$ denotes the coordinates of $a$. 
Assume that the network $G$ admits positive steady states. We prove the conclusion by deducing a  contradiction. 
Note that $a^\top \mathcal{N}$ is non-zero and does not change sign.
Hence, for any rate constant vector $\kappa \in \mathbb{Q}^m_{>0}$ and for any corresponding positive steady state $x \in \mathbb{Q}^s_{>0}$, the coefficients in $a^\top \mathcal{N} v(\kappa, x)$ are all positive or all negative.
By \eqref{eq:vj}, for any rate constant vector $\kappa \in \mathbb{Q}^m_{>0}$ and for any corresponding positive steady state $x \in \mathbb{Q}^s_{>0}$, we have $v(\kappa, x) \in \mathbb{Q}^m_{>0}$.
Hence, $a^\top \mathcal{N} v(\kappa, x)$ is positive or negative. 
Then, by \eqref{eq:anv=af}, for any rate constant vector $\kappa \in \mathbb{Q}^m_{>0}$ and for any positive steady state $x \in \mathbb{Q}^s_{>0}$, $\sum\limits_{i=1}^s a_i f_i(\kappa, x)$ is positive or negative.
It is contrary to $x$ is a steady state. 
Hence, the network $G$ admits no positive steady states. 
\end{proof}
\begin{remark}\label{rmk:inconsistent}
We remark that the hypotheses of Theorem \ref{thm:coe N} can be computationally checked by algebraic tools such as {\tt FindInstance} in {\tt Mathematica} \cite{mathematica2024}. 
Also note that if  a matrix ${\mathcal N
}$ has a non-zero row that  does not change sign, then the matrix satisfies the hypotheses of  Theorem \ref{thm:coe N}. See the following example. 
\end{remark}
\begin{example}
Consider the following zero-one network: 
\begin{align}
    0 \xrightarrow{\kappa_1} X_3,\;\; 0 \xrightarrow{\kappa_2}X_2, \;\;0 \xrightarrow{\kappa_3}X_2+X_3, \;\; X_1+X_2+X_3 \xrightarrow{\kappa_4}0. \nonumber
\end{align}
The stoichiometric matrix $\mathcal{N}$ is 
\begin{align}
    \begin{pmatrix}
        0&0&0&-1\\
        0&1&1&-1\\
        1&0&1&-1
    \end{pmatrix}. \nonumber
\end{align}
Let $a=(1,0,0)^\top$. Note that $a^\top \mathcal{N}=(0,0,0,-1)$. 
Hence, by Theorem \ref{thm:coe N}, the network $G$ admits no positive steady states. 
On the other hand, note that $f_1=-\kappa_4x_1x_2x_3$, and
so the steady-state system $f$ obviously has no positive solutions.  
\end{example}

\subsection{Sufficient conditions for vacuous ideals}\label{sub:gb1}
For any matrix $\mathcal{A}$, we denote by $row_i(\mathcal{A})$ and   $col_i(\mathcal{A})$ the $i$-th row  and  the $i$-th column  of $\mathcal{A}$, respectively.
For any  vector $a$, we denote by $a_i$ the $i$-th coordinate of $a$.
\begin{theorem} \label{lm:gb 1 para}
    For any  network $G$ \eqref{eq:network}, let $f$ be the steady-state system defined as in \eqref{eq:sys}. 
    If  there exists $a \in \mathbb{Q}^s$ such that 
    \begin{enumerate}
        \item[{(I)}]  
           $ P:= \mathcal{N}^\top a \neq \vec{0}$, \;and
        \item[{(II)}]  for any $i  \in \{1,\dots,s\}$, {$P_i\neq 0$ implies $col_i(\mathcal{Y}) = \vec{0}$,} 
    \end{enumerate}
  then $\langle f \rangle=\langle 1 \rangle$  in
${\mathbb Q}(\kappa)[x]$.
\end{theorem}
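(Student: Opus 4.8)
The plan is to exhibit an explicit unit of $\mathbb{Q}(\kappa)[x]$ that already lies in $\langle f\rangle$; once such a unit is produced, $\langle f\rangle=\langle 1\rangle$ is immediate. The element I would test is the linear combination $a^\top f=\sum_{i=1}^{s}a_i f_i$, which belongs to $\langle f\rangle$ because $a\in\mathbb{Q}^s\subset\mathbb{Q}(\kappa)^s$. Using $f=\mathcal{N}v(\kappa,x)$ from \eqref{eq:sys}, I would rewrite it as
\[
a^\top f \;=\; a^\top\mathcal{N}\,v(\kappa,x)\;=\;(\mathcal{N}^\top a)^\top v(\kappa,x)\;=\;P^\top v(\kappa,x)\;=\;\sum_{j=1}^{m}P_j\,v_j(\kappa,x),
\]
where the index $j$ ranges over the $m$ reactions, matching the fact that $P=\mathcal{N}^\top a$ has one entry per reaction. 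So the whole problem reduces to analysing this single polynomial $\sum_j P_j v_j$.

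Next I would feed in hypothesis (II): whenever $P_j\neq 0$ we have $col_j(\mathcal{Y})=\vec 0$, i.e.\ $\alpha_{1j}=\dots=\alpha_{sj}=0$, so by \eqref{eq:vj} the corresponding rate function degenerates to the constant $v_j(\kappa,x)=\kappa_j\prod_{i=1}^{s}x_i^{0}=\kappa_j$; the terms with $P_j=0$ drop out entirely. This collapses the expression to
\[
a^\top f\;=\;\sum_{j\,:\,P_j\neq 0}P_j\,\kappa_j ,
\]
an element of the coefficient field $\mathbb{Q}(\kappa)$ that involves none of the indeterminates $x_1,\dots,x_s$. Hypothesis (I) gives $P=\mathcal{N}^\top a\neq\vec 0$, so at least one $P_j$ is nonzero, and since distinct reactions carry distinct indeterminate rate constants, the linear form $\sum_{j:\,P_j\neq 0}P_j\kappa_j$ is a nonzero polynomial in $\mathbb{Q}[\kappa]$, hence a unit of the field $\mathbb{Q}(\kappa)$ and therefore of $\mathbb{Q}(\kappa)[x]$. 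Multiplying $a^\top f$ by the inverse of this unit writes $1$ as a $\mathbb{Q}(\kappa)[x]$-combination of $f_1,\dots,f_s$, which gives $\langle f\rangle=\langle 1\rangle$.

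The argument is short, and the one point that genuinely deserves care is the nonvanishing of $\sum_j P_j\kappa_j$: this uses that the rate constants are treated as algebraically independent parameters, so that no cancellation can occur among the surviving terms. That is precisely why the conclusion is stated in $\mathbb{Q}(\kappa)[x]$ rather than after specializing $\kappa$ to numerical values. The only other thing to watch is the bookkeeping noted above, namely that condition (II) should be read as quantifying over the reaction indices, consistent with $P=\mathcal{N}^\top a\in\mathbb{Q}^m$.
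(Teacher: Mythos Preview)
Your argument is correct and follows essentially the same route as the paper: form $a^\top f = P^\top v$, use condition (II) to reduce the surviving $v_j$'s to the constants $\kappa_j$, and conclude that $\sum_j P_j\kappa_j$ is a unit of $\mathbb{Q}(\kappa)[x]$ lying in $\langle f\rangle$. Your added remarks on why this linear form in the $\kappa_j$'s is genuinely nonzero, and on the indexing in condition (II) (which should run over the $m$ reaction indices rather than $\{1,\dots,s\}$), are both well taken.
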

\begin{proof}
    Let $v=(v_1, \dots, v_m)^\top$ be defined as in \eqref{eq:vj}. 
    By {condition (I), 
    we have $P \neq \vec{0}$. 
    And, 
    we have
    \begin{align}\label{eq:gb1 pv j}
        \sum_{j=1}^n P_{i_j}v_{i_j}=P^\top v=\sum_{i=1}^s a_i row_i(\mathcal{N}) v,
    \end{align}
where $P_{i_1}, \dots, P_{i_n}$ denotes all the non-zero coordinates in $P$.}  
By {condition} {(II)} , for any $j \in \{1,\dots,n\}$, we have $col_{i_j}(\mathcal{Y})=\vec{0}$.
Hence, by \eqref{eq:vj}, for any $j \in \{1,\dots,n\}$,  
\begin{align}
    v_{i_j}=\kappa_{i_j}.\nonumber
\end{align}
Then, by \eqref{eq:sys} and by \eqref{eq:gb1 pv j}, we have 
\begin{align}
   \sum_{i=1}^s a_i f_i = \sum_{i=1}^s a_i row_i(\mathcal{N}) v=\sum\limits_{j=1}^n P_{i_j}\kappa_{i_j} \in \langle f\rangle. \nonumber
\end{align}
Thus, we have $\langle f\rangle = \langle 1\rangle $ in
${\mathbb Q}(\kappa)[x]$. 
\end{proof}
\begin{example}
We illustrate how Theorem {\ref{lm:gb 1 para}} works by the following network: 
\begin{align}
X_1 \xrightarrow{\kappa_1} X_2, \;\;X_2 \xrightarrow{\kappa_2} X_1, \;\; 0 \xrightarrow{\kappa_3} X_2. \nonumber
\end{align}
The corresponding reactant matrix is 
\begin{align}
\label{eq:ex gb1 Y}
\mathcal{Y}= \begin{pmatrix}
1 & 0 & 0\\
0 & 1 & 0
\end{pmatrix}.
\end{align}
The corresponding stoichiometric matrix is 
\begin{align}
\mathcal{N}=\begin{pmatrix}
-1 & 1 & 0\\
1 & -1 & 1
\end{pmatrix}.\nonumber
\end{align}
Let $a=(1,1)^\top$. Then,  we have $P:=\mathcal{N}^\top a= (0,0,1)^\top$. 
Note that $P_3 \neq 0$. 
By \eqref{eq:ex gb1 Y},   $col_3(\mathcal{Y}) = \vec{0}$.
 Hence, by Theorem \ref{lm:gb 1 para}, we have $\langle f\rangle = \langle 1\rangle $.
\end{example}
A row (column) of a matrix is called a \defword{non-zero row (column)} if there exists a non-zero element in this row (column). If a row (column) has both positive and negative elements, we say \defword{the row (column) changes signs}.  {Notice that if there exists one non-zero row of a stoichiometric matrix $\mathcal{N}$ that does not change sign, then the corresponding network admits no positive steady states.}
\begin{theorem}\label{lm:gb1 col}
   For any $d$-dimensional zero-one network $G$ \eqref{eq:network},  let $f$ be the steady-state system defined as in \eqref{eq:sys}. 
   We denote by $c$ the number of zero columns of the reactant matrix $\mathcal{Y}$. 
   {Assume that all the non-zero rows of the stoichiometric matrix $\mathcal{N}$ change signs.}
    Then, we have the following statements. 
    \begin{enumerate}
        \item[{(I)}] \label{lm:gb1 col c<r}There is at least one non-zero column in $\mathcal{Y}$ (i.e., $c<m$).
        \item[{(II)}] \label{lm:gb1 col c=r-1}If $c=m-1$ and $d \ge 2$, then  $\langle f \rangle=\langle 1 \rangle$  in
${\mathbb Q}(\kappa)[x]$.
        \item[{(III)}] \label{lm:gb1 col c=r-2}If $c=m-2$ and $d =s \ge 3$, then  $\langle f \rangle=\langle 1 \rangle$  in
${\mathbb Q}(\kappa)[x]$.
        \item[{(IV)}] \label{lm:gb1 col c=0}If $c=0$, then  $\langle f \rangle \neq \langle 1 \rangle$  in
${\mathbb Q}(\kappa)[x]$.
    \end{enumerate}
\end{theorem}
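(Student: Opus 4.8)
The plan is to handle the four claims essentially by hand, reducing (II) and (III) to Theorem \ref{lm:gb 1 para} by exhibiting a suitable vector $a$, and settling (I) and (IV) by short direct arguments.

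For (I), I would argue by contradiction. If $c=m$ then $\mathcal Y=\vec 0$, so every reaction has trivial reactant; since $G$ is a zero-one network in which reactant and product differ for each reaction, $\mathcal N$ then equals the product matrix, hence has all entries in $\{0,1\}$ and every column non-zero. As $\mathcal N$ has at least one column and every column is non-zero, $\mathcal N\neq\vec 0$, so $\mathcal N$ possesses a non-zero row; that row has only non-negative entries and thus does not change sign, contradicting the standing hypothesis that all non-zero rows of $\mathcal N$ change signs. Hence $c<m$, which is (I).

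For (II) and (III) I would produce $a\in\mathbb Q^s$ meeting the two hypotheses of Theorem \ref{lm:gb 1 para}. Let $J\subseteq\{1,\dots,m\}$ be the set of indices $j$ with $col_j(\mathcal Y)\neq\vec 0$, so $|J|=1$ in case (II) and $|J|=2$ in case (III). It suffices to find $a$ with $a^\top col_j(\mathcal N)=0$ for all $j\in J$ and with $P:=\mathcal N^\top a\neq\vec 0$: then $P_j=0$ for every $j\in J$, so $P_j\neq 0$ forces $j\notin J$, i.e.\ $col_j(\mathcal Y)=\vec 0$, which is exactly condition (II) of Theorem \ref{lm:gb 1 para}, while $P\neq\vec 0$ is condition (I). Now $\{a:\ a^\top col_j(\mathcal N)=0\ \text{for all}\ j\in J\}$ has dimension at least $s-|J|$, while $\{a:\ \mathcal N^\top a=\vec 0\}$ is the orthogonal complement of the column space of $\mathcal N$ and hence has dimension $s-d$. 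In case (II) we have $s-|J|=s-1>s-d$ because $d\ge 2$, so the first space is not contained in the second and the required $a$ exists; in case (III), $d=s$ makes the second space $\{\vec 0\}$ while the first has dimension $s-2\ge 1$ since $s\ge 3$, so any non-zero vector in it works. Theorem \ref{lm:gb 1 para} then gives $\langle f\rangle=\langle 1\rangle$.

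For (IV), if $c=0$ then every column of $\mathcal Y$ is non-zero, so each $v_j(\kappa,x)=\kappa_j\prod_{i=1}^s x_i^{\alpha_{ij}}$ is a non-constant monomial and therefore every $f_i=\sum_{j=1}^m \mathcal N_{ij}v_j$ has vanishing constant term. The $\mathbb Q(\kappa)$-algebra homomorphism $\mathbb Q(\kappa)[x]\to\mathbb Q(\kappa)$ sending each $x_i$ to $0$ maps every $f_i$ to $0$ but maps $1$ to $1$, hence $1\notin\langle f\rangle$ and $\langle f\rangle\neq\langle 1\rangle$. The only point that needs care is the dimension bookkeeping in (II)–(III): one must ensure the constructed $a$ lies outside the orthogonal complement of the column space of $\mathcal N$ (so that $\mathcal N^\top a\neq\vec 0$), and this is precisely what the rank hypotheses $d\ge 2$ and $d=s$ are there to guarantee; everything else is routine.
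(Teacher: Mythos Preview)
Your proof is correct. Parts (I) and (IV) match the paper's arguments essentially verbatim: in (I) you and the paper both observe that $\mathcal Y=\vec 0$ forces all entries of $\mathcal N$ into $\{0,1\}$, producing a non-zero row without a sign change; in (IV) both arguments note that every $v_j$ has positive $x$-degree, so evaluation at $x=\vec 0$ kills $\langle f\rangle$ but not $1$.

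Where you genuinely diverge is in (II) and (III). The paper argues concretely from the zero-one structure: it pins down the shape of the non-zero column(s) of $\mathcal Y$ and the corresponding column(s) of $\mathcal N$ (showing, e.g., in case (II) that the unique non-zero column of $\mathcal Y$ is all $1$'s and the matching column of $\mathcal N^*$ is all $-1$'s), and then exhibits $a$ as a difference of two rows of $\mathcal N$ (or, in case (III)(ii), as a left null vector of the $s\times 2$ tail $\mathcal A_4$). Your argument is a clean dimension count: with $J$ the set of non-zero columns of $\mathcal Y$, the subspace $\{a:a^\top col_j(\mathcal N)=0,\ j\in J\}$ has dimension $\ge s-|J|$, while $\{a:\mathcal N^\top a=\vec 0\}$ has dimension $s-d$; the rank hypotheses $d\ge 2$ (resp.\ $d=s\ge 3$) force the first to strictly exceed the second, yielding the required $a$. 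This route never invokes the zero-one hypothesis or the sign-change hypothesis, so it in fact proves (II) and (III) for arbitrary networks, not just zero-one ones; the paper's structural analysis is more explicit (it literally writes down which rows to subtract) but buys no extra generality. Both approaches feed into Theorem~\ref{lm:gb 1 para} in the same way.
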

\begin{proof}
\begin{enumerate}
\item[{(I)}] Recall that any stoichiometric coefficients $\alpha_{ij}$ and $\beta_{ij}$ are defined as in \eqref{eq:network}.
By the definition of zero-one network, for any $(i,j) \in \{1,\dots,s\} \times \{1, \dots, m\}$, we have $\beta_{ij}\in \{0, 1\}$.
So, if there exists $(i,j)$ such that $\alpha_{ij}=0$, then the $(i,j)$-entry of $\mathcal{N}$ (i.e., $\beta_{ij}-\alpha_{ij}$) can only be $0$ or $1$. 
Hence, if all entries of $\mathcal{Y}$ are $0$ (i.e., $c=m$), then every entry of $\mathcal{N}$ is $0$ or $1$. 
By the definition of zero-one network, $\mathcal{N}$ is not a zero matrix. 
So, there exists a non-zero row of $\mathcal{N}$ whose entries are   $0$ or $1$, 
which is contrary to the assumption that all the non-zero rows of $\mathcal{N}$ change signs.
Therefore, there is at least one non-zero column in $\mathcal{Y}$ (i.e., $c<m$).
\item[{(II)}] 
Since all {the} non-zero rows of $\mathcal{N}$ change signs, there exists the element  $-1$ in every non-zero row  of $\mathcal{N}$. 
We define a new matrix $\mathcal{N}^*$ by removing the zero rows from $\mathcal{N}$. 
Also, we define $\mathcal{Y}^*$ by removing the same rows from $\mathcal{Y}$. 
By the definition of zero-one network,  we have $\alpha_{ij}\in\{0, 1\}$ and $\beta_{ij}\in\{0, 1\}$.
Hence, if there exists an $(i,j)$-entry of {$\mathcal{N}^*$} such that the entry is $-1$ (i.e., $\beta_{ij}-\alpha_{ij}=-1$), then  $\alpha_{ij}=1$. 
    Hence, there exists the element $1$ in every row  of {$\mathcal{Y}^*$}.
    {Since the number of zero columns of $\mathcal{Y}$ is $m-1$, we can assume that the first $m-1$ columns of $\mathcal{Y}$ are the zero vectors. Then, by the definition of $\mathcal{Y}^*$, the first $m-1$ columns of $\mathcal{Y}^*$ are the zero vectors.} 
    Note that there exists the element $1$ in every row of {$\mathcal{Y}^*$}. 
    So, all the coordinates of the the $m$-th column  of {$\mathcal{Y}^*$} must be $1$.    
    {Since  $d \ge 2 $ (note here, ${{\rm rank}}({\mathcal N})=d$), the matrix $\mathcal{Y}^*$ has more than one row.}
    Then, we have 
    \begin{align}
    {\mathcal{Y}^*}=
    \begin{pmatrix}
  0&  \cdots & 0 & 1\\
  & \ddots  &  & \vdots \\
  0& \cdots &  0 &1
\end{pmatrix}. \nonumber
    \end{align} 
    By the definition of zero-one network, if there exists $(i,j)$ such that $\alpha_{ij}=0$, then the $(i,j)$-entry of {$\mathcal{N}^*$} (i.e., $\beta_{ij}-\alpha_{ij}$) can only be $0$ or $1$. 
    Hence, every coordinate of the first $m-1$ columns of {$\mathcal{N}^*$} can only be $0$ or $1$. 
   Note that there exists $-1$ in every row vector of {$\mathcal{N}^*$}. So, all the coordinates of the the $m$-th column  of {$\mathcal{N}^*$} are $-1$. 
   Thus, {$\mathcal{N}^*$} has the following form
       \begin{align}\label{eq:nform}
    {\mathcal{N}^*}=
    \begin{pmatrix}
  &   &  & -1\\
  & \mathcal{A}_1  &  & \vdots \\
  &  &   &-1
\end{pmatrix},
    \end{align}
{where all the entries of the  submatrix $\mathcal{A}_1$ are $0$ or $1$. 
Recall that for any $i \in \{1,\dots,s\}$, $row_i(\mathcal{N})$ denotes the $i$-th row  of $\mathcal{N}$.
Since  $d \ge 2 $, there exist two non-zero rows $row_i(\mathcal{N})$ and $row_j(\mathcal{N})$} such that $P^\top:=row_i(\mathcal{N})-row_j(\mathcal{N})\neq \vec{0}^\top$.
So,  if $P_k\neq 0$, then by \eqref{eq:nform}, we have $k \in \{1,\dots, m-1\}$.
Note that for any $k \in \{1,\dots, m-1\}$, the $k$-th column vector of $\mathcal{Y}$ is a zero vector. 
Therefore, by Theorem \ref{lm:gb 1 para}, we have $\langle f\rangle=\langle 1\rangle$. 
\item[{(III)}] Assume that the first $m-2$ columns of $\mathcal{Y}$ are the zero vector.
    Then, by the definition of zero-one network, we have
  \begin{align}
    \mathcal{Y}=
    \begin{pmatrix}
  0&  \cdots & 0 & \\
  & \ddots  &  & \mathcal{A}_2 \\
  0& \cdots &  0 &
\end{pmatrix}, \nonumber
    \end{align}
    where the entries of the $s \times 2$ submatrix $\mathcal{A}_2$ are $0$ or $1$. 
    {Since $d=s$, all rows of $\mathcal{N}$ are non-zero.} 
    Then,  similar to the proof of (2), by the definition of zero-one network, we have 
      \begin{align}
      \label{eq:gb1 m-2 N}
    \mathcal{N}=
    \begin{pmatrix}
 \mathcal{A}_3 & \mathcal{A}_4
\end{pmatrix},
    \end{align}
    where the entries of the $s \times (m-2)$ submatrix $\mathcal{A}_3$ are $0$ or $1$.
    {Since all rows of $\mathcal{N}$ change signs,} there exists the element $-1$ in every row  of $\mathcal{N}$.
    Hence, the rows of the $s \times 2$ submatrix   $\mathcal{A}_4$ must be in the set
    \begin{align}\label{eq:row B kind} 
    \{(0,-1),(-1,0),(1,-1),(-1,1),(-1,-1)\}.
    \end{align}
    \begin{itemize}
    \item[(i)] Assume that there exist $k<s$ different row vectors in \eqref{eq:row B kind} appearing in $\mathcal{A}_4$.
       Since $k<s$,   at least two  rows  in $\mathcal{A}_4$ are the same. 
        Assume that the $i$-th and the $j$-th ($i, j \in \{1,\dots,s\}$) rows in $\mathcal{A}_4$ are the same.
        Define $P^\top:=row_i(\mathcal{N})-row_j(\mathcal{N})$. Then, we have 
        $P_{m-1}=P_{m}=0$.
        Since  ${{\rm rank}}(\mathcal N)=d=s$, we have $P\neq \vec{0}$.  
        Therefore, for any $k \in \{1,\dots,m\}$, if $P_k\neq 0$, we have $k \in \{1,\dots,m-2\}$. 
        Note that for any $k \in \{1,\dots,m-2\}$, the $k$-th column of $\mathcal{Y}$ is the zero vector. 
         Hence, by Theorem \ref{lm:gb 1 para}, we have $\langle f\rangle=\langle 1\rangle$. 
        \item[(ii)] Assume that  there exist $s$ different row vectors in \eqref{eq:row B kind} appearing in $\mathcal{A}_4$. 
        Note that any matrix consisting of $s\ge3$ different row vectors in \eqref{eq:row B kind} is rank-two.
        Hence, ${\rm rank}(\mathcal{A}_4)=2$. 
        Note that there are $s\ge3$ rows in $\mathcal{A}_4$. 
        So, there exists $a\in \mathbb{Q}^s\setminus\{\vec{0}\}$ such that $\mathcal{A}_4^\top a=\vec{0}$. 
        Recall that $a=(a_1, \dots, a_s)^\top$. 
        By \eqref{eq:gb1 m-2 N},  $\mathcal{A}_4^\top a$ is equal to the last two coordinates of $P^\top:=\sum\limits_{i=1}^s a_i row_i(\mathcal{N})$. 
        Hence, we have $P_{m-1}=P_m=0$. 
        Note that ${\rm rank} (\mathcal{N})=d=s$.
        So, $P \neq \vec{0}$ since $a\neq \vec{0}$. 
        Hence, for any $k \in \{1,\dots,m\}$, if $P_k\neq 0$, then we have $k \in \{1,\dots,m-2\}$. 
        Note that for any $k \in \{1,\dots,m-2\}$, the $k$-th column of $\mathcal{Y}$ is the zero vector. 
         Therefore, by Theorem \ref{lm:gb 1 para}, we have $\langle f\rangle=\langle 1\rangle$.
    \end{itemize}
    \item[{(IV)}] Since there is no zero columns in $\mathcal{Y}$, by \eqref{eq:vj}, for any $i \in \{1,\dots,m\}$, $v_i = \kappa_i \prod_{j=1}^s x^{\alpha_{ji}}$, where {$\sum\limits_{j=1}^s \alpha_{ji} > 0$} since $\alpha_{ji}\in \{0,1\}$. 
    {Hence, by \eqref{eq:sys}, for any $i \in \{1,\dots,s\}$, if $row_i(\mathcal{N})$ is a zero vector, then $f_i=0$. 
    By \eqref{eq:sys}, for any $i \in \{1,\dots,s\}$, if $row_i(\mathcal{N})$ is non-zero,  then all terms in  $f_i$ contain the variables $x_j$'s. }
    So, in ${\mathbb Q}(\kappa)[x]$, $1 \notin \langle f \rangle$. Hence, we have $\langle f \rangle \neq \langle 1 \rangle$. 
\end{enumerate}
\end{proof}
\begin{example}
We illustrate how  Theorem \ref{lm:gb1 col} works by the following examples. \\
 {(I)}   Consider the following zero-one network: 
\begin{align}
    &0 \xrightarrow{\kappa_1} X_3,\;\; 0 \xrightarrow{\kappa_2}X_2, \;\;0 \xrightarrow{\kappa_3}X_2+X_3, \nonumber\\ &0 \xrightarrow{\kappa_4}X_1,\;\; X_1+X_2+X_3 \xrightarrow{\kappa_5}0. \nonumber
\end{align}
The stoichiometric matrix $\mathcal{N}$ is 
\begin{align}
    \begin{pmatrix}
        0&0&0&1&-1\\
        0&1&1&0&-1\\
        1&0&1&0&-1
    \end{pmatrix}. \nonumber
\end{align}
Note that ${\rm rank}(\mathcal{N})=3$. 
The reactant matrix $\mathcal{Y}$ is 
\begin{align}
\begin{pmatrix}
      0&0&0&0&1\\
    0&0&0&0&1\\
    0&0&0&0&1  
\end{pmatrix}. \nonumber
\end{align}
Note that there are $4$  zero columns in $\mathcal{Y}$ (i.e., $c=4=m-1$). 
Hence, by Theorem \ref{lm:gb1 col} {(II)}, we have $\langle f \rangle=\langle 1 \rangle$. 
Note that
\begin{align}
    f_1=&\kappa_4- \kappa_5x_1x_2x_3, \nonumber\\
    f_2=&\kappa_2 +\kappa_3- \kappa_5x_1x_2x_3, \nonumber\\
    f_3=&\kappa_1 +\kappa_3- \kappa_5x_1x_2x_3.\nonumber
\end{align}
It is straightforward to check that  $\frac{f_1-f_2}{\kappa_4-\kappa_2-\kappa_3}=1$. 

{(II)}   Consider the following zero-one network: 
\begin{align}
    &0 \xrightarrow{\kappa_1} X_3,\;\; 0 \xrightarrow{\kappa_2}X_2, \;\;0 \xrightarrow{\kappa_3}X_2+X_3,\nonumber\\ &X_2 \xrightarrow{\kappa_4}X_1,\;\; X_1+X_2+X_3 \xrightarrow{\kappa_5}0.\nonumber
\end{align}
The stoichiometric matrix $\mathcal{N}$ is 
\begin{align}
    \begin{pmatrix}
        0&0&0&1&-1\\
        0&1&1&-1&-1\\
        1&0&1&0&-1
    \end{pmatrix}.\nonumber
\end{align}
Note that ${\rm rank}(\mathcal{N})=3$, i.e., $d=s =3$. 
The reactant matrix $\mathcal{Y}$ is 
\begin{align}
\begin{pmatrix}
      0&0&0&0&1\\
    0&0&0&1&1\\
    0&0&0&0&1  
\end{pmatrix}.\nonumber
\end{align}
Note that there are $3$  zero columns in $\mathcal{Y}$ (i.e., $c=3=m-2$). 
Hence, by Theorem \ref{lm:gb1 col} {(III)}, we have $\langle f \rangle=\langle 1 \rangle$. 
Note that
\begin{align}
    f_1=&\kappa_4x_2- \kappa_5x_1x_2x_3, \nonumber\\
    f_2=&\kappa_2 +\kappa_3-\kappa_4x_2- \kappa_5x_1x_2x_3,\nonumber \\
    f_3=&\kappa_1 +\kappa_3- \kappa_5x_1x_2x_3.\nonumber
\end{align}
It is straightforward to check that  $\frac{f_1+f_2-2f_3}{-2\kappa_1+\kappa_2-\kappa_3}=1$. 

{(III)}    Consider the following zero-one network: 
\begin{align}
    &X_1 \xrightarrow{\kappa_1} X_3,\;\; X_1 \xrightarrow{\kappa_2}X_2, \;\;X_1 \xrightarrow{\kappa_3}X_2+X_3,\nonumber\\ &X_2 \xrightarrow{\kappa_4}X_1,\;\; X_1+X_2+X_3 \xrightarrow{\kappa_5}0. \nonumber
\end{align}
The stoichiometric matrix $\mathcal{N}$ is 
\begin{align}
    \begin{pmatrix}
        -1&-1&-1&1&-1\\
        0&1&1&-1&-1\\
        1&0&1&0&-1
    \end{pmatrix}. \nonumber
\end{align}
The reactant matrix $\mathcal{Y}$ is 
\begin{align}
\begin{pmatrix}
    1&1&1&0&1\\
    0&0&0&1&1\\
    0&0&0&0&1  
\end{pmatrix}. \nonumber
\end{align}
Note that there is no  zero columns in $\mathcal{Y}$ (i.e., $c=0$). 
Hence, by Theorem \ref{lm:gb1 col} {(IV)}, we have $\langle f \rangle \neq \langle 1 \rangle$. 
Note that
\begin{align}
    f_1=&-\kappa_1x_1-\kappa_2x_1-\kappa_3x_1+\kappa_4x_2- \kappa_5x_1x_2x_3, \nonumber\\
    f_2=&\kappa_2x_1 +\kappa_3x_1-\kappa_4x_2- \kappa_5x_1x_2x_3, \nonumber\\
    f_3=&\kappa_1x_1 +\kappa_3x_1- \kappa_5x_1x_2x_3. \nonumber
\end{align}
It is straightforward to check that  $1 \notin \langle f \rangle$.
\end{example}
\begin{remark}
We remark that although Theorem \ref{lm:gb 1 para}  is straightforward to prove, the hypotheses in the theorems  {cannot} be directly checked by a computer, while Theorem \ref{lm:gb1 col} is explicit enough.   In Section \ref{sec:alg}, {we solve} this issue by {Algorithm \ref{alg:gb1}}.   
\end{remark}

\subsection{Sufficient conditions for equivalence}\label{sub:equ}
\begin{lemma}\label{lm:ideal equal}
   For any  two  networks $G$ and $\hat{G}$ {with $s$ species and $m$ reactions}, let $\mathcal{N}$ and $\hat{\mathcal{N}}$ be the corresponding stoichiometric matrices,  and let $\mathcal{Y}$ and $\hat{\mathcal{Y}}$ be the reactant matrices. 
    Let $f$ and $\hat{f}$ be the steady-state systems defined as in \eqref{eq:sys}. 
    If  \begin{enumerate}
        \item[{(I)}] $\mathcal{Y}=\hat{\mathcal{Y}}$, and
        \item[{(II)}] for any $ i \in \{1, \dots, s\}$, we have
        $$row_i(\mathcal{N}) \in \operatorname{span}_{\mathbb{Q}} \{row_1(\hat{\mathcal{N}}), \dots, row_s(\hat{\mathcal{N}})\},$$ and $$row_i(\hat{\mathcal{N}}) \in \operatorname{span}_{\mathbb{Q}} \{row_1(\mathcal{N}), \dots, row_s(\mathcal{N})\},$$ 
    \end{enumerate}
    where $\operatorname{span}_{\mathbb{Q}}\{\cdot\}$ means the rational vector space spanned by a set of rational vectors,  
    then $\langle f \rangle= \langle \hat{f} \rangle$ in ${\mathbb Q}(\kappa)[x]$.
\end{lemma}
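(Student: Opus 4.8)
The plan is to reduce the statement to an elementary linear-algebra observation, exploiting the fact that the two steady-state systems are built from the \emph{same} vector of monomials once condition (I) holds. First I would record that, by \eqref{eq:vj}, the flux vector $v(\kappa,x)=(v_1,\dots,v_m)^\top$ depends on a network only through its reactant matrix, since $v_j=\kappa_j\prod_{i=1}^s x_i^{\alpha_{ij}}$; hence the $(i,j)$-entries of $\mathcal{Y}$ and $\hat{\mathcal{Y}}$ being equal forces $v(\kappa,x)=\hat v(\kappa,x)$ as an identity in $\mathbb{Q}(\kappa)[x]^m$ (the rate constants are the same symbolic parameters in both systems). Consequently, by \eqref{eq:sys}, $f=\mathcal{N}v$ and $\hat f=\hat{\mathcal{N}}v$, that is, $f_i=row_i(\mathcal{N})\,v$ and $\hat f_i=row_i(\hat{\mathcal{N}})\,v$ for every $i\in\{1,\dots,s\}$.

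Next I would use condition (II). Fix $i\in\{1,\dots,s\}$. Since $row_i(\mathcal{N})\in\operatorname{span}_{\mathbb{Q}}\{row_1(\hat{\mathcal{N}}),\dots,row_s(\hat{\mathcal{N}})\}$, there exist rationals $c_{i1},\dots,c_{is}$ with $row_i(\mathcal{N})=\sum_{k=1}^s c_{ik}\,row_k(\hat{\mathcal{N}})$. Multiplying on the right by $v$ gives
\begin{align}
f_i \;=\; row_i(\mathcal{N})\,v \;=\; \sum_{k=1}^s c_{ik}\,row_k(\hat{\mathcal{N}})\,v \;=\; \sum_{k=1}^s c_{ik}\,\hat f_k \;\in\; \langle \hat f\rangle, \nonumber
\end{align}
where the coefficients $c_{ik}\in\mathbb{Q}\subseteq\mathbb{Q}(\kappa)$ are legitimate scalars in the ring $\mathbb{Q}(\kappa)[x]$. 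As this holds for every $i$, we conclude $\langle f\rangle\subseteq\langle\hat f\rangle$. The second half of condition (II), namely $row_i(\hat{\mathcal{N}})\in\operatorname{span}_{\mathbb{Q}}\{row_1(\mathcal{N}),\dots,row_s(\mathcal{N})\}$ for all $i$, yields the reverse inclusion $\langle\hat f\rangle\subseteq\langle f\rangle$ by the symmetric computation. Hence $\langle f\rangle=\langle\hat f\rangle$ in $\mathbb{Q}(\kappa)[x]$.

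There is essentially no hard step here; the only point that needs care is the first one — that equality of reactant matrices yields the \emph{literal} identity $v=\hat v$ of the two monomial vectors (not merely equality up to a permutation of coordinates), which is precisely what makes the rows of $\mathcal{N}$ and $\hat{\mathcal{N}}$ act on one common vector. Once this is in place, conditions (I) and (II) together say exactly that the $\mathbb{Q}$-row spaces of $\mathcal{N}$ and $\hat{\mathcal{N}}$ coincide while the underlying monomial vectors are identical, and the equality of ideals is immediate. Equivalently, one may observe that $\langle f\rangle$ is generated by $\{\,w^\top v : w\in\operatorname{span}_{\mathbb{Q}}\{row_1(\mathcal{N}),\dots,row_s(\mathcal{N})\}\,\}$, a set depending only on $\mathcal{Y}$ and on the rational row space of $\mathcal{N}$, so the two hypotheses force $\langle f\rangle=\langle\hat f\rangle$.
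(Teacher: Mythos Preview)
Your proof is correct and follows essentially the same approach as the paper: use condition~(I) to identify the flux vectors $v=\hat v$, then use condition~(II) to express each $f_i$ as a rational linear combination of the $\hat f_k$'s (and vice versa), giving both inclusions of ideals. You have simply spelled out in detail what the paper's proof summarizes in one line.
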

\begin{proof}
Let $v=(v_1,\dots,v_m)^\top$ and $\hat{v}=(\hat{v}_1,\dots,\hat{v}_m)^\top$ be defined as in \eqref{eq:vj} corresponding to $G$ and $\hat{G}$ respectively, where 
\begin{align}
\label{eq:vi vi'}
 v_i = \kappa_i \prod_{j=1}^s x_j^{\alpha_{ji}},\;\;   \hat{v}_i = \kappa_i \prod_{j=1}^s x_j^{\hat{\alpha}_{ji}}.
\end{align}
By $\mathcal{Y}=\hat{\mathcal{Y}}$, for any $i\in\{1,\dots,s\}$ and for any $j \in \{1,\dots,s\}$, $\alpha_{ij}=\hat{\alpha}_{ij}$. 
Hence, by \eqref{eq:vi vi'},  we have $v=\hat{v}$.  So, the conclusion follows from
\eqref{eq:sys} and {condition} {(II)} .  
\end{proof}
\begin{remark}
If $G$ has less reactions than $\hat{G}$ does, then we can 
check if there exists a subnetwork of  $\hat{G}$ such that  the  subnetwork is equivalent {to} $G$. Here, by ``subnetwork", we mean the network consists of some reactions of  $\hat{G}$ and it has the same stoichiometric subspace as $\hat{G}$, where stoichiometric subspace means the column space of  {the} stoichiometric matrix. 
By \cite{BP16, MBJ2023}, if a subnetwork admits multistability or Hopf bifurcations, then the original network also admits  multistability or Hopf bifurcations.
\end{remark}

\begin{example}
Consider the following two networks $G$ and $\hat{G}$.
\begin{align}
G:& \quad X_1+X_2 \xrightarrow{\kappa_1} X_2, &&X_2 \xrightarrow{\kappa_2} 0, &&0  \xrightarrow{\kappa_3} X_1+X_2,\nonumber\\
\hat{G}:&\quad  X_1+X_2 \xrightarrow{\kappa_1} X_2, &&X_2 \xrightarrow{\kappa_2} X_1, &&0 \xrightarrow{\kappa_3} X_2.\nonumber
\end{align}
Note that the two networks $G$ and $\hat{G}$ have the same set of reactants.  Hence, the reactant matrices $\mathcal{Y}$ and $\hat{\mathcal{Y}}$ are equal.
The stoichiometric matrices  $\mathcal{N}$ and $\hat{\mathcal{N}}$ are as follows.
\begin{align}
\mathcal{N}=\begin{pmatrix}
-1& 0&1\\
0 &-1 & 1
\end{pmatrix}, \quad 
\hat{\mathcal{N}}=\begin{pmatrix}
-1 & 1 &0\\
0 & -1 & 1
\end{pmatrix}. \nonumber
\end{align}
Note that 
\begin{align*}
    row_1(\mathcal{N}) &= row_1(\hat{\mathcal{N}}) +row_2(\hat{\mathcal{N}}),\\
    row_2(\mathcal{N}) &= row_2(\hat{\mathcal{N}}),\\
    row_1(\hat{\mathcal{N}}) &= row_1(\mathcal{N}) -row_2(\mathcal{N}).
\end{align*}
So, by {Lemma} \ref{lm:ideal equal}, the ideals generated by $f$ and $\hat{f}$ are equal.
\end{example}
\begin{remark}
 Notice that each network is uniquely  determined by a pair of stoichiometric matrix $\mathcal{N}$ and reactant matrix $\mathcal{Y}$.
 However, a stoichiometric matrix $\mathcal{N}$ might correspond to many different reactant matrices. 
 For example, if a network has the following stoichiometric matrix 
    \begin{align}
    \label{eq:NY4 N}
        \mathcal{N}=
        \begin{pmatrix}
            1 & 1 & 0 &-1 \\
            1&-1&-1&1
        \end{pmatrix},
    \end{align}
    then the  reactant matrix can be one of the following two matrices  
     \begin{align}
            {\mathcal Y}_1=\begin{pmatrix}
             0&0&0&1\\
             0&1&1&0
            \end{pmatrix},\quad
            {\mathcal Y}_2= \begin{pmatrix}
             0&0&1&1\\
             0&1&1&0
            \end{pmatrix}. \nonumber
        \end{align}
 So, for two networks, if  their stoichiometric matrices  $\mathcal{N}$ and $\hat{\mathcal{N}}$  have different row spaces, then  we do not need to compare the reactant matrices  $\mathcal{Y}$ and $\hat{\mathcal{Y}}$. That means in practice, we can first check condition (II) of Lemma \ref{lm:ideal equal}, and if it holds, we check condition (I). In Section \ref{sec:alg}, we  check if condition (II)  holds by checking if the reduced row echelon forms of the matrices $\mathcal{N}$ and $\hat{\mathcal{N}}$ are equal. See Step 4 in Algorithm \ref{alg:main}. 
\end{remark}
\section{Algorithm}\label{sec:alg}
In this section, we propose
Algorithm \ref{alg:main}, together with two sub-algorithms (Algorithm \ref{alg:inconsistent} and Algorithm \ref{alg:gb1}),  to determine the equivalence of zero-one networks.  The goal is to efficiently  classify  all  $d$-dimensional  zero-one networks consisting of $s$ species $m$ reactions according to the equivalence determined by their steady-state ideals.  The correctness of Algorithm \ref{alg:main} follows from {Theorems \ref{thm:coe N}--\ref{lm:gb1 col}}, {Lemma} \ref{lm:ideal equal} and  the correctness of the {sub-algorithms}.

\begin{algorithm}\label{alg:main}
DetermineEquivalence
    \begin{description}
        \item[Input.] The number of {species} $s$, the number of {reactions} $m$ and the dimension $d$
        \item[Output.] Finitely many classes of the steady-state systems $f$ corresponding to the $d$-dimensional $s$-species $m$-reaction zero-one  networks such that the systems in the same class generate the same steady-state ideal
        \item[Step 1.] Enumerate the
        stoichiometric coefficients $\alpha_{ij}$ and $\beta_{ij}$ for all $s$-species zero-one reactions and compute the corresponding stoichiometric vectors $\beta-\alpha$. Then,  enumerate all possible $s \times m$ stoichiometric matrices $\mathcal{N}$ with rank $d$ whose columns  are formed by these stoichiometric vectors.  
        \item[Step 2.]  We call CoefficientsForInconsistent($s$, $d$, $m$), and we get a set $S_1 \subset \mathbb{Q}^s$. For each stoichiometric  matrix obtained in Step 1, we do the following steps. 
        \begin{description}
            \item[{Step 2.1.}]
        If there exists $a\in S_1$ such that the vector $a^\top \mathcal{N}$ is non-zero and does not change sign, then delete this matrix.
         \item[{Step 2.2.}]
        If the set $\mathcal{F}^+(\mathcal{N})$ is empty, then delete this matrix. 
        \end{description}
        ($\#$By Step 2, we remove the stoichiometric matrices whose all possible corresponding networks definitely admit no positive steady states by Theorem \ref{thm:coe N} and a standard way.)
        \item[{Step 3.}] For each remaining stoichiometric matrix,  delete the stoichiometric matrices that have the same form with it. 
        \item[{Step 4.}] For every remaining stoichiometric matrix $\mathcal{N}$, generate all possible corresponding reactant matrices $\mathcal{Y}$ for zero-one networks. For every pair $\{\mathcal{N}, \mathcal{Y}\}$, we do the following steps.
                 \begin{description}
            \item[{Step 4.1.}] If
        \begin{enumerate}
            \item[(I)] $c=m-1$, and $d \ge 2$, or
            \item[(II)] $c=m-2$, and $d =s \ge 3$,
        \end{enumerate}
        then delete the pair. ($\#$Note here, $c$ denotes the number of zero columns of the reactant matrix $\mathcal{Y}$.) 
        \item[{Step 4.2.}] We call CoefficientsForVacuous($s$, $d$), and we get a set $S_2 \subset \mathbb{Q}^s$.
        If there exists $a\in S_2$ such that
        \begin{enumerate}
            \item[{(I)}] $\mathcal{N}^\top a \neq \vec{0}$, and
            \item[{(II)}] for any $i \in \{1,\dots,s\}$, the $i$-th coordinate of $\mathcal{N}^\top a$ is non-zero implies $col_i(\mathcal{Y})= \vec{0}$,
        \end{enumerate}
        then delete the pair. 
        \end{description}
        ($\#$By Step 4,  we remove the networks that have vacuous steady-state ideals according to Theorems \ref{lm:gb 1 para}--\ref{lm:gb1 col}.)
        \item[{Step 5.}] 
        Combine the remaining pairs $\{\mathcal{N}, \mathcal{Y}\}$ into the same group if they have  the same reduced row echelon form of $\mathcal{N}$. For each group, we do the following steps.    
        \begin{description}
            \item[{Step 5.1.}] For each pair $\{\mathcal{N}, \mathcal{Y}\}$ and $\{\hat{\mathcal{N}}, \hat{\mathcal{Y}}\}$, we do the following steps. 
            \begin{description}
                \item[{Step 5.1.1.}] Check if there exists a permutation on the rows of $\mathcal{Y}$ such that $\mathcal{Y}$ and $\hat{\mathcal{Y}}$ are equal after permuting the rows of $\mathcal{Y}$.   
          \begin{description}
            \item[{Step 5.1.1.1.}]  If yes, then put the pair of matrices $\{\mathcal{N}, \mathcal{Y}\}$ and $\{\hat{\mathcal{N}}, \hat{\mathcal{Y}}\}$ into the same group. 
                \item[{Step 5.1.1.2.}] If not, then consider the next pair of matrices.
        \end{description}
         \end{description}
          ($\#$By Step 5.1, we check the equivalence according to Lemma \ref{lm:ideal equal}.)
        \item[{Step 5.2.}] For every group, choose the first pair as the representative element.  
        Note that every pair of matrices $\{\mathcal{N}, \mathcal{Y}\}$ uniquely determines a network. 
        \end{description}
      \item[{Step 6.}]  For every representative network obtained from Step 5, calculate the reduced Gr\"obner bases for steady-state ideal $\langle f \rangle$.  Delete the systems  whose corresponding reduced Gr\"obner bases are $\{1\}$ or have a monomial as an element.\\
      {($\#$Note here, if the Gr\"obner basis contains a single monomial, then the network corresponding to the system admits no positive steady states for a generic choice of rate constants.)}
      \item[{Step 7.}] Combine the networks with the same reduced Gr\"obner basis into the same class. Choose the first network as the representative element. Return these representative elements.
    \end{description}
\end{algorithm}
\begin{remark}

    Note that in Step 1, we first enumerate the stoichiometric coefficients $\alpha_{ij}$ and $\beta_{ij}$, rather than the stoichiometric vectors directly. This approach facilitates extensions of the algorithm, particularly for the quadratic zero-one network where we require that the sum of $\alpha_{ij}$ for every reactant equals $2$.
\end{remark}
\begin{remark}

   We remark that Step 2.1 in Algorithm \ref{alg:main} generalizes  \cite[Algorithm 1--Step 2.1]{YX2025}, which follows from Remark 
\ref{rmk:inconsistent} and the correctness of the following Algorithm \ref{alg:inconsistent}. 
\end{remark}
\begin{algorithm}\label{alg:inconsistent}
CoefficientsForInconsistent
\begin{description}
\item [Input.] The number of species $s$ and the number of the reactions $m$
\item [Output.] A set $S \subset \mathbb{Q}^s$ such that for any  $m$-reaction $s$-species zero-one network, if there exists $a \in \mathbb{Q}^s$ such that the condition of Theorem \ref{thm:coe N} holds, then there exists $\tilde{a}\in S$ satisfying  the condition of Theorem \ref{thm:coe N}.
    \item [Step 1.] Enumerate all possible $s$-species zero-one stoichiometric vectors.
    \item [Step 2.]
    Let $S=\{e_1,\dots, e_s\}$, where $e_i \in \mathbb{Q}^s$ satisfies the $i$-th coordinate is $1$ and other coordinates are $0$.
    Let $w:=(w_1, \ldots, w_s)$ be an indeterminate vector in  $ \mathbb{Q}_{\ge 0}^s$.  Enumerate all relations among $w_1, \ldots, w_s$.    For every relation, we do the following steps.
        \begin{description}
        \item[Step 2.1.]  For each stoichiometric vector $\gamma$ generated in Step 1, if it satisfies $w^\top \gamma < 0$, then delete it. If not, then keep it. 
        \item[Step 2.2.]       
        For the remaining stoichiometric vectors in Step 2.1,
        enumerate all possible choices of $m$ vectors from them, forming a stoichiometric matrix $\mathcal{N}$.  For each $\mathcal{N}$, we do the following step.
        \begin{description}
        \item[Step 2.2.1.]  
       If all non-zero rows of $\mathcal{N}$ change signs, then check  if there exists  $q\in \mathbb{Q}^s$ such that $q^\top \mathcal{N}$ is non-zero and does not change sign. ($\#$Here, we use {\tt FindInstance} in {\tt Mathematica}.) If so, add the Cartesian product $\bigotimes\limits_{i=1}^s\{q_i, -q_i\}$ to the set $S$.
        \end{description}
          \end{description}
\end{description}
\end{algorithm}

\textbf{Proof of the Correctness of Algorithm \ref{alg:inconsistent}}
For any $m$-reaction $s$-species zero-one network with the stoichiometric matrix $\mathcal{N}$, assume that there exists $a \in \mathbb{Q}^s$ such that the condition of Theorem \ref{thm:coe N} holds, i.e., $a^\top \mathcal{N}$ is non-zero and does not change sign.  Note that if there exists $i \in \{1,\dots,s\}$ such that ${\rm row}_i({\mathcal N})$ is non-zero and does not change sign, then $e_i^\top \mathcal{N} = {\rm row}_i({\mathcal N})$ is non-zero and does not change sign. Therefore, by Step 2,  the vector $\tilde{a} = e_i \in S$ satisfies the condition of Theorem \ref{thm:coe N}. Below, we assume that all non-zero rows of ${\mathcal N}$ change signs. 
\begin{itemize}
    \item[(I)] Assume that all entries of $a^\top \mathcal{N}$ are non-negative. 
    Let $\hat{a} = |a| \in \mathbb{Q}_{\ge 0}^s$. Then, there exists a relation of $w$ in Step 2 that satisfies the same relation of $\hat{a}$.  
    Let $\hat{\mathcal N}$ be the matrix formed by the rows ${\rm sign}(a_i){\rm row}_i({\mathcal N})$ $(i=1, \ldots, s)$.   Note that 
\begin{align*}
    a^\top \mathcal{N} =\hat{a}^\top\hat{\mathcal{N}}.
\end{align*}

Since $a^\top \mathcal{N}$ is non-zero and all its entries are non-negative, the same holds for $\hat{a}^\top\hat{\mathcal{N}}$. 
Hence, for every column vector $\gamma$ of the matrix $\hat{\mathcal{N}}$, we have $\hat{a}^\top \gamma \ge 0$. 
Then, by Step 2.1 and Step 2.2, $\hat{\mathcal{N}}$ is one of the enumerated matrices under the relation of $\hat{a}$. 
Since all non-zero rows of $\mathcal{N}$ change signs, then every non-zero row of $\hat{\mathcal{N}}$ also changes sign. 
Then, by Step 2.2.1.2, the vector $q \in \mathbb{Q}^S$ obtained via {\tt FindInstance} satisfies that $q^\top \hat{\mathcal{N}}$ is non-zero and all entries are non-negative.  
Let $\tilde{a}=
({\rm sign}(a_1) q_1, \ldots, {\rm sign}(a_s) q_s)$. By Step 2.2.1.2, we know that $\tilde{a} \in S$. Notice that $\tilde{a}^\top \mathcal{N} = q^\top \hat{\mathcal{N}}$. Hence, $\tilde{a}^\top \mathcal{N}$ is non-zero and all its entries are non-negative, i.e., $\tilde{a}$ satisfies the condition of Theorem \ref{thm:coe N}.

\item[(II)] Assume that all entries of $a^\top \mathcal{N}$ are non-positive. Notice that $-a^\top \mathcal{N}$ is non-zero and all entries of $-a^\top \mathcal{N}$ are non-negative. Then, the proof is similar to (I).  
\end{itemize}

\begin{example}\label{ex:inconsistent}
We illustrate how Algorithm \ref{alg:inconsistent} works for $m=5$ and $s=3$.
\begin{description}
    \item [Step 1.] Enumerate all possible $3$-species  stoichiometric vectors according to zero-one reactions, i.e., all vectors in 
    \begin{align*}
    \{-1,0,1\}^3 \setminus (0,0,0)^\top.
    \end{align*}
 \item [Step 2.] Let $S = \{(0,0,1)^\top, (0,1,0)^\top, (0,0,1)^\top\}$. Enumerate all relations among $w_1$, $w_2$, and $w_3$, where $w_i \in \mathbb{Q}_{\ge 0}$. By symmetry, we only need to consider the following cases. 
    \begin{align*}
    w_1>w_2>w_3, \ w_1=w_2>w_3, \ w_1>w_2=w_3, \ w_1=w_2=w_3.  
    \end{align*} We treat  every relation as follows. 
    \begin{itemize}
        \item[(I)] $w_1>w_2>w_3$. 
        \begin{description}
        \item[Step 2.1.]  For each stoichiometric vector $\gamma$ generated in Step 1, if it satisfies $w^\top \gamma < 0$, then delete it. If not, then keep it. For example, 
        \begin{itemize}
            \item[(i)] we delete $(-1,-1,-1)^\top$ since $w^\top (-1,-1,-1)^\top = -w_1-w_2-w_3 <0$;
            \item[(ii)] we keep 
        $(-1,1,1)^\top$ since we cannot determine the sign of $w^\top(-1,1,1)^\top = -w_1+w_2+w_3$ under the condition $w_1>w_2>w_3 \ge 0$. 
        \end{itemize} 
        The remaining vectors are 
        \begin{align*}
             &(-1,1,1)^\top, &&(0,0,1)^\top, &&(0,1,-1)^\top, &&(0,1,0)^\top, &&(0,1,1)^\top, \nonumber \\
             &(1,-1,-1)^\top, &&(1,-1,0)^\top, &&(1,-1,1)^\top, &&(1,0,-1)^\top, &&(1,0,0)^\top, \nonumber \\
             &(1,0,1)^\top, &&(1,1,-1)^\top, &&(1,1,0)^\top, &&(1,1,1)^\top.
        \end{align*}
        \item[Step 2.2.]  
       For the remaining stoichiometric vectors in Step 2.1,
        enumerate all possible choices of $5$ vectors from them, forming a stoichiometric matrix $\mathcal{N}$.  For each $\mathcal{N}$, we do Step 2.2.1 of Algorithm \ref{alg:inconsistent}. 
        For example, consider  
            $$\mathcal{N} = \begin{pmatrix}
            -1&0&0 & 0 &1\\
            1&0&1&1&-1\\
            1&1&-1&0&-1
        \end{pmatrix}.$$ \begin{description}
        \item[Step 2.2.1.]  
      Notice that all non-zero rows of $\mathcal{N}$ change signs. Using {\tt FindInstance} in {\tt Mathemetica}, we find $q = (2, 1, 1)^\top$ such that $q^\top \mathcal{N}$ is non-zero and does not change sign. Then, add the Cartesian product $\{2,-2\}, \times\{1,-1\}, \times\{1,-1\}$ to set $S$. 
       \end{description} 
        \end{description}
        After enumerating all possible matrices, we add the follow set of vectors into $S$: $$
          \{0, \pm\frac{1}{4}, \pm\frac{1}{3}, \pm\frac{1}{2}, \pm\frac{2}{3}, \pm\frac{3}{4}, \pm 1, \pm\frac{3}{2},  \pm2, \pm 3\}^3.
           $$      
           \item[(II)] We implement the similar steps as case (I) for the other relations $w_1=w_2>w_3$, $w_1>w_2=w_3$ and $w_1=w_2=w_3$. And finally, we get 
           \begin{align}\label{eq:inconsistent}
           S = \{0, \pm\frac{1}{4}, \pm\frac{1}{3}, \pm\frac{1}{2}, \pm\frac{2}{3}, \pm\frac{3}{4}, \pm 1, \pm\frac{3}{2},  \pm2, \pm 3\}^3.
           \end{align}
              \end{itemize}
\end{description}

\end{example}

\begin{algorithm}\label{alg:gb1} 
CoefficientsForVacuous
    \begin{description}
    \item [Input.] The number of species $s$ and the dimension $d$
  \item [Output.] A set $S \subset \mathbb{Q}^s$ such that for any  $d$-dimensional $s$-species zero-one network, if there exists $a \in \mathbb{Q}^s$ such that {conditions} {(I)} and {(II)} of Theorem \ref{lm:gb 1 para} hold, 
then there exists $\tilde{a}\in S$ satisfying  {conditions} {(I)} and {(II)} of Theorem \ref{lm:gb 1 para}
    \item [Step 1.] Enumerate all $s$-species zero-one reactions.     
    \item [Step 2.] Let $S:=\emptyset$. For every $n \in \{1,\dots, d-1\}$,  {for all possible choices of  $n$ linearly independent reactions from the reactions enumerated in Step 1, we do the following steps. Note that here we operate on all possible combinations of $n$ reactions.} 
    \begin{description}
        \item[Step 2.1.] Denote by $$\mathcal{M}=\begin{pmatrix}
            M_1\\
            \vdots\\
            M_n
        \end{pmatrix}$$ an $n\times s$ matrix, where each row $M_j$  is the transpose of the stoichiometric vector of a reaction from the chosen $n$ independent reactions. 
        \item[Step 2.2.] {Calculate a basis of the nullspace of $\mathcal{M}$ in $\mathbb{Q}^s$} and add all the vectors  in the basis into the set $S$.
        
    \end{description}
\item[Step 3.] Return the set $S$. 
    \end{description}
\end{algorithm}

{\bf Proof of the Correctness of Algorithm \ref{alg:gb1}.}
{For any  $d$-dimensional $s$-species zero-one network, suppose ${\mathcal Y}$ and ${\mathcal N}$ are the
reactant matrix and the stoichiometric matrix.} Define 
\begin{align}
\label{eq:delta1}
    \Delta_1&:=\{i \in \{1,\dots,m\}| col_i(\mathcal{Y}) \neq \vec{0}\}, \\
    \label{eq:delta2}
    \Delta_2&:=\{1,\dots,m\} \setminus \Delta_1. 
\end{align}
Let $M$ and $M_k$ ($k\in \{1,2\}$) be the real vector spaces spanned by
the columns of ${\mathcal N}$
and the set of column vectors 
$\{col_i({\mathcal N})|i\in \Delta_k\}$, respectively. 
{If there exists $a \in \mathbb{Q}^s$ such that {conditions} {(I)} and {(II)} of Theorem \ref{lm:gb 1 para} hold, then}
by \eqref{eq:delta1} and  {condition} {(II)} of Theorem \ref{lm:gb 1 para}, we have   \begin{align*}
    a\in M_1^\perp. 
\end{align*} 
Since $\mathcal{N}^\top a\neq \vec{0}$ ({condition} {(I)}  of Theorem \ref{lm:gb 1 para}), we have $a\not\in M^\perp$. Since  
$M_1\subset M$,  we have  ${{\rm dim}}(M_1)<{{\rm dim}}(M)=d$. So, by Step 2 of Algorithm \ref{alg:gb1}, 
there exists $B\subset S$ such that $B$ is {a} basis of  {$M^{\perp}_1$}.
Notice that there exists $\tilde{a} \in B$ such that
 $\tilde{a}\not\in M^\perp$, i.e., 
 $\mathcal{N}^\top \tilde{a}\neq\vec{0}$. Otherwise, 
 we have $B\subset M^\perp$ and hence, we have $M_1^\perp\subset M^\perp$, which is contrary to the fact that $a\in M_1^{\perp}\backslash M^\perp$. 
 So, {conditions} {(I)} and {(II)} of Theorem \ref{lm:gb 1 para} hold for $\tilde{a}\in {\mathbb Q}^s$. 

\begin{example}\label{ex:gb1 1}
We illustrate how Algorithm \ref{alg:gb1} works for $s=d=3$. 
\begin{description}
    \item[Step 1.] Enumerate all three-species zero-one reactions (there are $56$  reactions in total). 
    \item[Step 2.] Let $S:= \emptyset$. Note that $d=3$. For every $n\in \{1,2\}$, choose $n$ linearly independent reactions from the $56$ reactions enumerated in Step 1 and for every choice, do the following steps (below we only give details on the first round). 
    \begin{description}
        \item[Step 2.1] Suppose $(0,0,1)$ is the stoichiometric vector corresponding to the first reaction from the $56$ reactions. Let $\mathcal{M}:=(0,0,1)$. 
        \item[Step 2.2] The basis of the null space of $\mathcal{M}$ is $\{(1,0,0)^\top, (0,1,0)^\top\}$. Add the vectors in the basis into the set $S$. 
    \end{description} 
    Similarly, we  do Steps 2.1 and 2.2 for all the other choices of reactions.  
    Then, we get the set $S$: 
    \begin{align}
    \label{eq:3s3d S}
        &\{0,1\}^3 \cup (\{-1,1\}^ {{2}} \times\{2\}) \cup \nonumber\\
        &(\{-1,1\}\times\{-2,2\}\times\{1\}) \cup (\{-2,2\}\times\{-1,1\}\times\{1\}) \cup \nonumber\\
        &\{(-1,-1,1)^\top, (1,-1,1)^\top, (-1,1,1)^\top, 
        (-1,0,1)^\top,\nonumber\\
        &(-1,1,0)^\top,(0,-1,1)^\top\}\setminus (0,0,0)^\top. 
    \end{align} 
\end{description}
\end{example}

\section{Implementation}\label{sec:implement}
We have implemented the procedure outlined in Algorithm \ref{alg:main}, with supporting code publicly available at \href{https://github.com/YueJ13/equivalence}{https://github.com/YueJ13/equivalence}. We further apply this algorithm to all three-dimensional zero-one networks consisting of three species and five reactions.
\begin{description}
\item[Step 1.] Enumerate the stoichiometric coefficients $\alpha_{ij}$ and $\beta_{ij}$ for all $3$-species zero-one reactions and compute the corresponding stoichiometric vectors $\beta-\alpha$. Then enumerate all possible $3 \times 5$ stoichiometric matrices $\mathcal{N}$ of rank $3$ whose columns  are stoichiometric vectors. There are $43520$ matrices.  
        \item[Step 2.]  We call CoefficientsForInconsistent($3$, $3$, $5$), and by the computation carried out in Example \ref{ex:inconsistent}, we get a set of rational vectors $S_1$ in \eqref{eq:inconsistent}. For each stoichiometric matrix obtained in Step 1, we do the following steps. 
        \begin{description}
            \item[{Step 2.1.}]
        If there exists $a\in S_1$ such that the vector $a^\top \mathcal{N}$ is non-zero and does not change sign, then delete this matrix.
         \item[{Step 2.2.}]
        If the set $\mathcal{F}^+(\mathcal{N})$ is empty, then delete this matrix. 
        \end{description}
        Then, we get $14444$ matrices.
        \item[{Step 3.}] 
        For each remaining stoichiometric matrix,  delete the stoichiometric matrices that have the same form with it. 
        Then, we get $2548$ matrices.  
        \item[{Step 4.}] For every remaining stoichiometric matrix $\mathcal{N}$, generate all possible corresponding reactant matrices $\mathcal{Y}$. For every pair $\{\mathcal{N}, \mathcal{Y}\}$, we do the following steps.
                 \begin{description}
            \item[{Step 4.1.}] If $c=3$ or $4$, then delete the pair. ($\#$Note here, $c$ denotes the number of zero columns of the reactant matrix $\mathcal{Y}$.) 
        \item[{Step 4.2.}] We call CoefficientsForVacuous($3$, $3$), and by the compotation carried out in Example \ref{ex:gb1 1}, we get a set of rational vectors $S_2$ in \eqref{eq:3s3d S}.
        If there exists $a\in S_2$ such that
        \begin{enumerate}
            \item[{(I)}] $\mathcal{N}^\top a \neq \vec{0}$, and
            \item[{(II)}] for any $i \in \{1,2, 3\}$, the $i$-th coordinate of $\mathcal{N}^\top a$ is non-zero implies $col_i(\mathcal{Y})= \vec{0}$,
        \end{enumerate}
        then delete the pair. 
        \end{description}
       Then, we get $91802$ networks.
        \item[{Step 5.}] 
        Combine the remaining pairs $\{\mathcal{N}, \mathcal{Y}\}$ into the same group if they have  the same reduced row echelon form of $\mathcal{N}$. For each group, we do the following steps.    
        \begin{description}
            \item[{Step 5.1.}] For each pair $\{\mathcal{N}, \mathcal{Y}\}$ and $\{\hat{\mathcal{N}}, \hat{\mathcal{Y}}\}$, we do the following steps. 
            \begin{description}
                \item[{Step 5.1.1.}] Check if there exists a permutation on the rows of $\mathcal{Y}$ such that $\mathcal{Y}$ and $\hat{\mathcal{Y}}$ are equal after permuting the rows of $\mathcal{Y}$.   
          \begin{description}
            \item[{Step 5.1.1.1.}] If yes, then put the pair of matrices $\{\mathcal{N}, \mathcal{Y}\}$ and $\{\hat{\mathcal{N}}, \hat{\mathcal{Y}}\}$ into the same group. 
                \item[{Step 5.1.1.2.}] If not, then consider the next pair of matrices.
        \end{description}
         \end{description}
        \item[{Step 5.2.}] For every group, choose the first pair $\{\mathcal{N}, \mathcal{Y}\}$ as the representative element.  
        \end{description}
                 Then, we get $48703$ networks (pairs).
      \item[{Step 6.}]  For every remaining network, calculate the reduced Gr\"obner bases for steady-state ideal $\langle f \rangle$.  Delete the networks  whose corresponding reduced Gr\"obner bases are $\{1\}$ or have a monomial as an element. Then, we get $38249$ networks. 
      \item[{Step 7.}] Combine the networks with the same reduced Gr\"obner basis into the same class. Choose the first network as the representative element. Finally,  we get $38185$ networks.
    \end{description}

   We record the computational results for  the above  steps   in Table \ref{ta:main}. In the experiments, we  use  the command {\tt Basis} \cite{F1999} in {\tt Maple} \cite{maple2020} to calculate the reduced Gr\"obner bases. 
Note that  the number of all  three-dimensional  three-species 
five-reaction zero-one networks is  $3819816$ \cite{YX2025}.  If we do not apply  Algorithm \ref{alg:main},   then we need about $60$ hours to check the reduced Gr\"obner bases of these networks. Our main purpose is to demonstrate the improvements achieved by Algorithm \ref{alg:main} compared with the pure computation of Gr\"obner bases. The computational time of Step 7 presented in Table \ref{ta:main} can be improved by applying any more advanced tools to calculate the  Gr\"obner  bases rather than
the standard {\tt Maple} package. Recall that the original version of Algorithm \ref{alg:main}, as presented in \cite{YX2025}, requires approximately one hour to complete the entire process (see \cite[Table 1]{YX2025}). 
Here, the improved version  has saved about half of the total computational time by operating on stoichiometric matrices rather than networks in the first three steps.  Another observation is that most inconsistent networks are filtered out when  the vacuous ideals are verified at Step 4. 







\begin{table}[h]
\caption{Computational results for $(3,5,3)$-networks  (min: minutes; s: seconds)}
\label{ta:main}
\centering
\begin{tabular}{c|c|c}
\hline 
STEP&NUMBER OF Matrices&TIME  \\
\hline
Step 1 & 43520 & 1 min\\
Step 2 & 14444 & 1 min\\
Step 3 & 2548 & 1 min\\
\hline 
STEP&NUMBER OF NETWORKS&TIME  \\
\hline
Step 4 & 91802 & 1 min\\
Step 5 & 48703 & 30 s\\
Step 6 & 38249 & 2 min\\
Step 7 & 38185 & 22 min\\ 
\hline
\end{tabular}\\
{\bf Notes:} (i) The column ``STEP" lists the steps in the procedure shown in Algorithm \ref{alg:main}. (ii) The middle column records the number of networks we get after carrying out each step. (iii) The column ``TIME" records the computational time for  each step.\\
We run the procedure by a 3.60 GHz Inter Core i9 processor (64GB total memory) under Windows 10. 
\end{table}



In addition, we  apply Algorithm \ref{alg:main} to two key classes of 
 zero-one  networks:
 three-dimensional (with $3$ species and $6$ reactions) and four-dimensional  (with $4$ species and $5$ reactions). Given that both classes include a large number of networks (the numbers of matrices at Step 1 are $276382$ and $7818144$ respectively), we restrict our computations to quadratic networks only. Notably, Algorithm \ref{alg:main} can be easily adjusted to compute quadratic zero-one networks
 (in Step 1, we only enumerate the stoichiometric vectors for the quadratic reactions and in Step 4, we only consider the reactant matrices ${\mathcal Y}$ for quadratic networks).
 We record the timings for the completion of the steps of Algorithm \ref{alg:main} in Table \ref{tab:363} and Table \ref{tab:454}, respectively. 
 Another interesting finding is that
  there exists no networks after carrying out Step 6 for these two classes of quadratic networks.
 That means these networks have no positive steady states for a generic choice of rate constants. 
We show an example of three species six-reaction three-dimensional zero-one quadratic network remaining in Step 5, see Example \ref{ex:363}.
\begin{table}[h]
\caption{Computational results for $(3,6,3)$-quadratic networks  (min: minutes; s: seconds)}
\label{tab:363}
\centering
\begin{tabular}{c|c|c}
\hline 
STEP&NUMBER OF Matrices&TIME  \\
\hline
Step 1 & 10693 & 1 min\\
Step 2 & 5628 & 1 min\\
Step 3 & 960 & 1 min\\
\hline 
STEP&NUMBER OF NETWORKS&TIME  \\
\hline
Step 4 & 1897 & 1 min\\ 
Step 5 & 1565 & 0.2 s\\
Step 6 & \textcolor{red}{0} & 0.14 s\\
Step 7 & NULL & NULL\\ 
\hline
\end{tabular}\\
\end{table}
\begin{table}[h]
\caption{Computational results for $(4,5,4)$-quadratic networks  (min: minutes; s: seconds)}
\label{tab:454}
\centering
\begin{tabular}{c|c|c}
\hline 
STEP&NUMBER OF Matrices&TIME  \\
\hline
Step 1 & 1440168 & 3 min\\
Step 2 & 112008 & 1 min\\
Step 3 & 4822 & 5 min\\
\hline 
STEP&NUMBER OF NETWORKS&TIME  \\
\hline
Step 4 & 21440 & 1 min\\
Step 5 & 5032 & 1.9 s\\
Step 6 & \textcolor{red}{0} & 2.9 s\\
Step 7 & NULL & NULL\\ 
\hline
\end{tabular}\\
\end{table}

\begin{example}\label{ex:363}
  Consider the following $3$-dimensional quadratic zero-one network with $3$ species and $6$ reactions: 
\begin{align}
    &X_2+X_3 \xrightarrow{\kappa_1} X_1+X_3,&& X_1+X_3 \xrightarrow{\kappa_2}X_2+X_3, && X_1+X_3 \xrightarrow{\kappa_3}X_1+X_2, \nonumber\\ &X_1+X_2\xrightarrow{\kappa_4}X_1,&& X_1+X_2 \xrightarrow{\kappa_5}X_2+X_3, && X_1+X_2 \xrightarrow{\kappa_6}X_1+X_2+X_3. \nonumber
\end{align}
The steady-state system $f$ is 
\begin{align*}
    f_1 &= \kappa_1 x_2 x_3-\kappa_2 x_1 x_3-\kappa_5 x_1 x_2,\\
    f_2 &=-\kappa_1 x_2 x_3+\kappa_2 x_1 x_3+\kappa_3 x_1 x_3 -\kappa_4 x_1 x_2,\\
    f_3 &= -\kappa_3 x_1 x_3+\kappa_5 x_1 x_2+\kappa_6 x_1 x_2. 
\end{align*}
From $f_1 = f_2=f_3=0$, we can get 
\begin{align*}
   f_1+f_2+f_3=(-\kappa_4+\kappa_6)x_1x_2 = 0. 
\end{align*} 
Hence, if  $\kappa_4=\kappa_6$,  the network admits infinitely many positive steady states that satisfy
$$x_1=\frac{\kappa_1\kappa_3}{\kappa_3\kappa_5+\kappa_2\kappa_5+\kappa_2\kappa_6}x_3, \; x_2=\frac{\kappa_3}{\kappa_5+\kappa_6}x_3.$$
However, 
the reduced Gr\"obner basis of the steady-state ideal $\langle f \rangle$ in ${\mathbb Q}(\kappa)[x]$ is $$\{x_2x_3,\ x_1x_3,\ x_1x_2\}.$$ 
That means the network indeed has no positive steady states for any generic choice of $\kappa\in {\mathbb R}_{>0}^6$. 
So, this network is consistent and has a non-vacuous steady-state ideal—hence its passage through Steps 2 and 4 of Algorithm \ref{alg:main}—yet it is still dynamically trivial.
\end{example}

 {\bf Acknowledgements.
}We thank Professor Balázs Boros for his suggestion on how to efficiently generate small networks in the Joint Annual Meeting KSMB-SMB 2024. 
We thank Professor Murad Banaji  for generating and providing all three-dimensional three-species  five-reaction zero-one networks by his own efficient procedure as a reference for us. We would like to thank the anonymous reviewers of ISSAC 25 for their insightful and valuable comments, which have greatly contributed to the improvement of this work.



\end{document}